\DeclareMathOperator{\imag}{i}
\DeclareMathOperator{\der}{d}
\def\<{\langle}
\def\>{\rangle}
\newtheorem{example}{Example}
\newcommand{{\Cd}}{{\mathbb{C}^d}}
\newcommand{\Tr}{\mathrm{Tr}}
\def\oper{{\mathchoice{\rm 1\mskip-4mu l}{\rm 1\mskip-4mu l}
{\rm 1\mskip-4.5mu l}{\rm 1\mskip-5mu l}}}
\def\<{\langle}
\def\>{\rangle}
\newtheorem{Theorem}{Theorem}
\newtheorem{Proposition}{Proposition}
\newcommand{\beq}{\begin{equation}}
\newcommand{\eeq}{\end{equation}}
\newcommand{\bear}{\begin{eqnarray}}
\newcommand{\ear}{\end{eqnarray}}
\newcommand{\bdm}{\begin{displaymath}}
\newcommand{\edm}{\end{displaymath}}
\begin{document}

\title{\bf Memory kernel approach to  generalized Pauli channels: \\ Markovian, semi-Markov, and beyond}
\author{ Katarzyna Siudzi{\'n}ska and Dariusz Chru{\'s}ci{\'n}ski }
\affiliation{ Institute of Physics, Faculty of Physics, Astronomy and Informatics \\  Nicolaus Copernicus University,
Grudzi\c{a}dzka 5/7, 87--100 Toru\'n, Poland}

\begin{abstract}
In this paper, we analyze the evolution of the generalized Pauli channels governed by the memory kernel master equation. We provide
necessary and sufficient conditions for the memory kernel to give rise to the legitimate (completely positive and trace-preserving) quantum evolution.
In particular, we analyze a class of kernels generating the quantum semi-Markov evolution, which is a natural generalization of the Markovian semigroup. Interestingly, the convex combination of Markovian semigroups goes beyond the semi-Markov case. Our analysis is illustrated with several examples.
\end{abstract}

\maketitle

\section{Introduction}

In the theory of open quantum systems \cite{OS1,OS2,OS3}, the use of the Born-Markov approximation
leads to the celebrated Markovian master equation,
\begin{equation} \label{MME}
\dot{\rho}_t=\mathcal{L}[\rho_t],
\end{equation}
where $\mathcal{L}$ is the generator of the Markovian semigroup given by the well-known Gorini-Kossakowski-Sudarshan-Lindblad form \cite{GKS,L},
{\medmuskip=0.5mu
\thinmuskip=0.5mu
\thickmuskip=0.5mu
\begin{equation}\label{GKSL_gen}
\mathcal{L}[\rho]=-\imag[H_{\rm eff},\rho] + \frac 12 \sum_\alpha \gamma_\alpha \left( V_\alpha \rho V_\alpha^\dagger - \frac 12 \{ V_\alpha^\dagger V_\alpha,\rho\}_+ \right),
\end{equation}}
where $H_{\rm eff}$ is the effective Hamiltonian of the system, $V_\alpha$ denote the noise operators, and $\gamma_\alpha\geq 0$ are the decoherence rates. Equation (\ref{MME}) leads to the completely positive, trace-preserving (CPTP) dynamical map $\rho_0 \longrightarrow \rho_t = \Lambda_t[\rho_0]$ satisfying the composition law,
\begin{equation}\label{CL}
  \Lambda_{t} \Lambda_u = \Lambda_{t+u} ,
\end{equation}
for all $t,u \geq 0$. The Born-Markov approximation assumes weak interactions and a separation of time scales between the system and its environment. Such approximation is usually valid in quantum optical systems. However, it is often violated in solid state physics. There are two natural generalizations of the above scheme. The first one introduces the time-local generator $\mathcal{L}_t$ which is of the form (\ref{GKSL_gen}) but with time-dependent $V_\alpha(t)$ and $\gamma_\alpha(t)$. In the second approach, one takes into account non-local memory effects through the Nakajima-Zwanzig equation \cite{Nakajima,Zwanzig},
\begin{equation}\label{memory_kernel}
\dot{\rho}_t = \int_0^t K_{t-\tau} \rho_\tau \der\tau,
\end{equation}
with $K_t$ being the memory kernel. Indeed, contrary to (\ref{MME}),  the rate $\dot{\rho}_t$ at the time $t$ depends on the whole history $\rho_\tau$ -- starting from the initial time $\tau = 0$, up to the current time $\tau=t$.  The Markovian semigroup (\ref{GKSL_gen}) is recovered for $K_t=2\delta(t)\mathcal{L}$.




The central problem with the memory kernel master equation (\ref{memory_kernel}) is to provide the necessary and sufficient conditions for the memory kernel super-operator $K_t$ which guarantee that the solution in the form of the dynamical map $\Lambda_t$ is CPTP. Such problem was originally posed by Barnett and Stenholm \cite{Barnett} for the memory kernel
$$ K_t= k(t)\mathcal{L} $$
with the memory function $k(t)$ and the legitimate Markovian generator $\mathcal{L}$. Unfortunately, in general, such memory kernels may lead to unphysical results. This issue was further analyzed in \cite{Maniscalco, Budini}.  Shabani and Lidar \cite{Shabani} proposed the so-called {\it post-Markovian master equation} with 
$$ K_t=k(t)\mathcal{L}e^{\mathcal{L}t} . $$
Again, this approach works for certain classes of Markovian generators $\mathcal{L}$ and memory functions $k(t)$. 
The authors succeeded in finding the necessary and sufficient conditions for this memory kernel to be legitimate. 
There is also the class of the qubit evolution \cite{Petruccione} for which this kernel always produces physical results.
Much attention was paid to finding the admissible memory kernels. It turned out that they can arise from the collisional model \cite{Vacchini2}. Another class was found for the random unitary qubit evolution \cite{chlopaki}. The quantum analogue of the semi-Markov evolution was analyzed in \cite{Piilo,Breuer,Breuer_Vacchini}. Interestingly, the proper definition of the quantum semi-Markov evolution was given only in \cite{semi-2}, using the notion of legitimate pairs of quantum maps \cite{early_semi-Markov}. For recent papers discussing memory kernel approach see also \cite{Ciccarello,Vacchini,Vacchini-2016,DC-2016,semi-1,Palermo}.

In this paper, we analyze the evolution of the generalized Pauli channels under the memory kernel master equation (\ref{memory_kernel}). We provide the necessary and sufficient conditions for the admissible memory kernel -- that is, the kernel giving rise to the CPTP dynamical map $\Lambda_t$. A special class of memory kernels corresponds to the so-called semi-Markov quantum evolution, which is the quantum analogue of the classical semi-Markov process. We provide several examples of the semi-Markov evolution of the generalized Pauli channels. Interestingly, the convex combination of Markovian semigroups (which is also the generalized Pauli channel) is not semi-Markov.

\section{Generalized Pauli channels}

The definition of the generalized Pauli channel involves the notion of mutually unbiased bases (MUBs). Two orthonormal bases $|\psi_k\>$, $|\phi_l\>\ \in\Cd$ are said to be mutually unbiased if and only if
\begin{equation}
|\<\psi_k|\phi_l\>|^2 = \frac 1 d .
\end{equation}
For $d=p^r$, where $p$ is a prime number, the number of MUBs in $\mathbb{C}^d$ is maximal and equal to $d+1$ \cite{Wootters,MAX}.

Take the $d$-dimensional Hilbert space for which one has $d+1$ MUBs,
$\{ |\psi^{(\alpha)}_0\>,\ldots,|\psi^{(\alpha)}_{d-1}\> \}$. The corresponding rank-1 projectors are given by $P^{(\alpha)}_l = |\psi^{(\alpha)}_l\>\< \psi^{(\alpha)}_l|$. Now, let us define  $d+1$ unitary operators
\begin{equation}\label{U}
U_{\alpha} = \sum_{l=0}^{d-1} \omega^{l} P_l^{(\alpha)} \ ,
\end{equation}
where $\omega = e^{2\pi i/d}$, and the family of completely positive maps
\begin{equation}
\mathbb{U}_\alpha[\rho] = \sum_{k=1}^{d-1}  U_{\alpha}^k \rho  U_{\alpha}^{k \dagger} .
\end{equation}
The evolution under the generalized Pauli channel is given by the following dynamical map \cite{Ruskai,nasze},
\begin{equation}\label{GPC}
\Lambda_t = p_0(t) \oper + \frac{1}{d-1}\sum_{\alpha=1}^{d+1} p_\alpha(t) \mathbb{U}_\alpha ,
\end{equation}
where $(p_0(t),p_1(t),\ldots,p_{d+1}(t))$ denotes the probability vector such that $p_0(0)=1$ and $p_\alpha(0)=0$ for $\alpha=1,\ldots,d+1$. By the identity map $\oper$, we understand $\oper[X]=X$ for any operator $X$. It is clear that  this definition reproduces the Pauli channel for $d=2$,
\begin{equation}\label{PC}
\Lambda_t = p_0(t) \oper + \sum_{\alpha=1}^{3} p_\alpha(t) \mathbb{U}_\alpha ,
\end{equation}
with $\mathbb{U}_\alpha[\rho] = \sigma_\alpha \rho \sigma_\alpha$, and $\sigma_\alpha$ being the Pauli matrices.

One easily solves the eigenvalue problem for $\Lambda_t$,
\begin{equation}\label{GPC_eigenvalue_eq}
\Lambda_t[ U_\alpha^k] = \lambda_\alpha(t) U_\alpha^k  \ , \ \ k=1,\ldots,d-1 ,
\end{equation}
with the eigenvalues
\begin{equation}\label{GPC_eigenvalues}
\lambda_\alpha(t) = p_0(t) + \frac{d}{d-1}p_\alpha(t) - \frac{1}{d-1} \sum_{\beta=1}^{d+1} p_\beta(t) ,
\end{equation}
and $\lambda_0(t)=1$. All the eigenvalues are real, whereas $\lambda_\alpha(t)$ ($\alpha=1,\dots,d+1$) are $(d-1)$-fold degenerated. The inverse relation reads
\begin{equation}\label{c1}
p_0(t)=\frac{1}{d^2}\left[1+(d-1)\sum_{\alpha=1}^{d+1}\lambda_\alpha(t)\right],
\end{equation}
\begin{equation}\label{c2}
p_\alpha(t) = \frac{d-1}{d^2} \left( 1 + d\lambda_\alpha(t) - \sum_{\beta=1}^{d+1} \lambda_\beta \right).
\end{equation}
Equations (\ref{c1}-\ref{c2}) make it clear that $\Lambda_t$ is a completely positive map if and only if the direct generalization of the Fujiwara-Algoet conditions \cite{Fujiwara, Ruskai, Zyczkowski},
\begin{equation}\label{Fuji-d}
-\frac{1}{d-1} \leq  \sum_{\beta =1}^{d+1} \lambda_\beta(t) \leq 1 + d \min_\beta \lambda_\beta(t) ,
\end{equation}
is satisfied for all $t \geq 0$.

The map $\Lambda_t$ satisfies the time-local master equation
\begin{equation}
\dot{\Lambda}_t = \mathcal{L}_t \Lambda_t
\end{equation}
with the corresponding time-local generator
\begin{equation}\label{GEN}
\mathcal{L}_t = \sum_{\alpha=1}^{d+1} \gamma_\alpha(t) \mathcal{L}_\alpha
\end{equation}
and
\begin{equation}
\mathcal{L}_\alpha = \frac 1d \left[ \mathbb{U}_\alpha -(d-1)\oper \right] = \Phi_\alpha-\oper ,
\end{equation}
where $\Phi_\alpha$ define the family of depolarizing channels,
\begin{equation}
  \Phi_\alpha[\rho] =  \sum_{l=0}^{d-1}  P_l^{(\alpha)} \rho  P_l^{(\alpha)}.
\end{equation}
The eigenvalue equation for $\mathcal{L}_t$ reads
\begin{equation}
\mathcal{L}_t[U^k_\alpha] = \mu_\alpha(t) U^k_\alpha ,
\end{equation}
with $\mu_\alpha(t) = \gamma_\alpha(t) - \gamma(t)$ and $\gamma(t) = \sum_{\alpha=1}^{d+1}\gamma_\alpha(t)$. Therefore, the time-dependent eigenvalues $\lambda_\alpha(t)$ of the dynamical map $\Lambda_t$ are given by
\begin{equation}\label{eigenvalue2}
\lambda_\alpha(t) = \exp[\Gamma_\alpha(t) - \Gamma(t)] ,
\end{equation}
where $\Gamma_\alpha(t)=\int_0^t\gamma_\alpha(\tau)\der\tau$ and $\Gamma(t)= \sum_{\alpha=1}^{d+1}\Gamma_\alpha(t)$.

\section{Memory kernel approach}

In this paper, we analyze the evolution of the generalized Pauli channel $\Lambda_t$ which is provided by the memory kernel equation
\begin{equation}\label{memory_kernel_equation}
\dot{\Lambda}_t=\int_0^tK_{t-\tau}\Lambda_\tau\der\tau
\end{equation}
with the following memory kernel,
\begin{equation}\label{K}
K_t = \sum_{\alpha=1}^{d+1}k_\alpha(t)\left[
\Phi_\alpha-\oper\right].
\end{equation}
Note that the eigenvalue equations of such a memory kernel are given by
\begin{equation}\label{kernel_eigenvalue_eq}
K_t[U_\alpha^k]=\kappa_\alpha(t) U_\alpha^k, \qquad K_t[\mathbb{I}]=0,
\end{equation}
where
\begin{equation}
\kappa_\alpha(t)=k_\alpha(t)- k(t) ,
\end{equation}
with $k(t) = \sum_{\beta=1}^{d+1}k_\beta(t)$. Taking (\ref{kernel_eigenvalue_eq}) and (\ref{GPC_eigenvalue_eq}) into account, we can rewrite the relationship between the memory kernel $K_t$ and the corresponding generalized Pauli channel $\Lambda_t$ (\ref{memory_kernel_equation}) in terms of the corresponding eigenvalues,
\begin{equation}
\dot{\lambda}_\alpha(t)=\int_0^t\kappa_\alpha(t-\tau) \lambda_\alpha(\tau)\der\tau ,
\end{equation}
with $\lambda_\alpha(0)=1$. In the Laplace transform (LT) domain, one finds the following relation,
\begin{equation}\label{lambda_s}
\widetilde{\lambda}_\alpha(s)=\frac{1}{s-\widetilde{\kappa}_\alpha(s)},
\end{equation}
where $\widetilde{f}(s)=\int_0^\infty f(t)e^{-st} \der t$ stands for the Laplace transform of $f(t)$. Let us parameterize the eigenvalues $\lambda_\alpha(t)$ as follows,
\begin{equation}\label{}
  \lambda_\alpha(t) = 1 - \int_0^t \ell_\alpha(\tau) \der \tau .
\end{equation}
One arrives at the following theorem.

\begin{Theorem} \label{TH1} The memory kernel $K_t$ defined in (\ref{K}) gives rise to a legitimate dynamical map $\Lambda_t$ if and only if the corresponding eigenvalues $\kappa_\alpha(t)$ are, in the LT domain, given by
\begin{equation}\label{th_1}
  \widetilde{\kappa}_\alpha(s) = - \frac{s \widetilde{\ell}_\alpha(s)}{1-   \widetilde{\ell}_\alpha(s)} ,
\end{equation}
and the functions $\ell_\alpha(t)$ satisfy
\begin{eqnarray}\label{CON}
   \int_0^t \ell_\alpha(\tau) \der \tau &\geq & 0, \nonumber \\
  \sum_{\alpha=1}^{d+1}   \int_0^t \ell_\alpha(\tau) \der \tau &\leq & \frac{d^2}{d-1}, \\
   \sum_{\alpha=1}^{d+1}   \int_0^t \ell_\alpha(\tau) \der \tau & \geq &  d\, \int_0^t \ell_\beta(\tau) \der \tau \nonumber
\end{eqnarray}
for $\beta=1,\ldots,d+1$.
\end{Theorem}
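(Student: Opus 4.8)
The strategy is to reduce the claim to the complete-positivity characterization (\ref{Fuji-d}) already established for the generalized Pauli channel, treating the Laplace transform merely as a change of variables. First I would derive (\ref{th_1}). Inserting the parameterization $\lambda_\alpha(t) = 1 - \int_0^t \ell_\alpha(\tau) \der \tau$ into the Laplace transform gives $\widetilde{\lambda}_\alpha(s) = [1 - \widetilde{\ell}_\alpha(s)]/s$; comparing this with (\ref{lambda_s}) and solving the resulting linear equation for $\widetilde{\kappa}_\alpha(s)$ produces exactly (\ref{th_1}). This step is purely algebraic. What needs a word of care is that it sets up a genuine correspondence between kernels of the form (\ref{K}) and tuples $(\ell_1,\dots,\ell_{d+1})$: the eigenvalues $\kappa_\alpha$ are recovered from $\ell_\alpha$ by inverse Laplace transform (allowing, as already in the Markovian case $K_t = 2\delta(t)\mathcal{L}$, a distributional part), and the linear map $(k_1,\dots,k_{d+1}) \mapsto (\kappa_1,\dots,\kappa_{d+1})$ with $\kappa_\alpha = k_\alpha - \sum_\beta k_\beta$ is invertible, $k_\alpha = \kappa_\alpha - \tfrac{1}{d} \sum_\beta \kappa_\beta$, so $\kappa_\alpha$ determines $K_t$. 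Under this correspondence the map solving (\ref{memory_kernel_equation}) has eigenvalues $\lambda_\alpha(t) = 1 - \int_0^t \ell_\alpha$.

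Next I would translate legitimacy of $\Lambda_t$ into (\ref{CON}). Trace and hermiticity preservation hold automatically (the $\lambda_\alpha(t)$ are real, the identity is always a fixed point, and $p_0 + \sum_\alpha p_\alpha = 1$ follows from (\ref{c1})--(\ref{c2})), so legitimacy is equivalent to complete positivity, hence --- by (\ref{c1})--(\ref{c2}) --- to the Fujiwara--Algoet conditions (\ref{Fuji-d}) holding for every $t \geq 0$. Writing $L_\alpha(t) := \int_0^t \ell_\alpha(\tau) \der \tau = 1 - \lambda_\alpha(t)$, the lower bound $\sum_\beta \lambda_\beta(t) \geq -\tfrac{1}{d-1}$ becomes $\sum_\beta L_\beta(t) \leq (d+1) + \tfrac{1}{d-1} = \tfrac{d^2}{d-1}$, the second line of (\ref{CON}); the upper bound $\sum_\beta \lambda_\beta(t) \leq 1 + d\lambda_\gamma(t)$, imposed for each $\gamma = 1,\dots,d+1$, becomes $\sum_\beta L_\beta(t) \geq d\, L_\gamma(t)$, the third line. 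The first line $L_\alpha(t) \geq 0$ is the statement $\lambda_\alpha(t) \leq 1$; it is in fact implied by the third line (sum $\sum_\beta L_\beta \geq d L_\gamma$ over all $\gamma \neq \alpha$), but it is the most transparent form of the bound $\lambda_\alpha \leq 1$ obeyed by every unital channel, so it is natural to list it.

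Combining the two steps gives both implications. If $K_t$ of the form (\ref{K}) yields a legitimate $\Lambda_t$, then its eigenvalues obey (\ref{Fuji-d}), so $\ell_\alpha := -\dot\lambda_\alpha$ satisfy (\ref{CON}) and, by the first step, $\widetilde{\kappa}_\alpha$ has the form (\ref{th_1}). Conversely, from any $\ell_\alpha$ satisfying (\ref{CON}) one defines $\widetilde{\kappa}_\alpha$ by (\ref{th_1}), reconstructs $k_\alpha$ and thus $K_t$, and the solution of (\ref{memory_kernel_equation}) then has eigenvalues $\lambda_\alpha(t) = 1 - \int_0^t \ell_\alpha$ obeying (\ref{Fuji-d}), so $\Lambda_t$ is legitimate. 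I expect the only point that is more than bookkeeping to be the Laplace step: one must check that every $\ell_\alpha$ compatible with (\ref{CON}) really does give, through (\ref{th_1}), a well-defined (in general distributional) kernel $K_t$ --- i.e. that $1-\widetilde{\ell}_\alpha(s)$ produces no spurious singularities obstructing the inverse transform --- and, conversely, that an admissible $K_t$ indeed produces $\lambda_\alpha$ of the assumed form; once this is granted, the inequalities (\ref{CON}) drop out of (\ref{Fuji-d}) by the elementary substitution above.
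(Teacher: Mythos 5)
Your argument is correct and is essentially the paper's own (one-sentence) proof, fleshed out: the three lines of (\ref{CON}) are exactly $\lambda_\alpha(t)\leq 1$, $p_0(t)\geq 0$ and $p_\beta(t)\geq 0$ via (\ref{c1})--(\ref{c2}), while (\ref{th_1}) follows from eliminating $\widetilde{\lambda}_\alpha(s)=[1-\widetilde{\ell}_\alpha(s)]/s$ against (\ref{lambda_s}). Your extra observations --- the explicit inverse $k_\alpha=\kappa_\alpha-\tfrac{1}{d}\sum_\beta\kappa_\beta$ and the redundancy of the first inequality --- are correct but do not change the route.
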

The proof is evident since the above conditions reproduce $\lambda_\alpha(t) \leq 1$, $p_0(t) \geq 0$, and $p_\beta(t) \geq 0$ for $\beta=1,\ldots,d+1$, respectively. The main problem is to find a reasonable class of functions $\ell_\alpha(t)$ satisfying conditions (\ref{CON}).

\begin{Proposition} Consider $\ell_\alpha(t) = \eta e^{- \xi_\alpha t}$ with $\eta, \xi_\alpha > 0$. If
\begin{eqnarray}
 \label{1a}
  \eta \sum_{\alpha=1}^{d+1} \frac{1}{\xi_\alpha}  &\leq& \frac{d^2}{d-1} , \\ \label{2a}
   \sum_{\alpha=1}^{d+1} \frac{1}{\xi_\alpha} &\geq &  \frac{d}{\xi_\beta} ,
\end{eqnarray}
then $\ell_\alpha(t)$'s satisfy (\ref{CON}).
\end{Proposition}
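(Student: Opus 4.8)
The plan is to reduce the three inequalities of~(\ref{CON}) to elementary facts about the primitives
\[
  I_\alpha(t) := \int_0^t \ell_\alpha(\tau)\,\der\tau \;=\; \eta\int_0^t e^{-\xi_\alpha\tau}\,\der\tau \;=\; \frac{\eta}{\xi_\alpha}\bigl(1-e^{-\xi_\alpha t}\bigr),
\]
which is the only quantity appearing in~(\ref{CON}). The case $t=0$ is trivial since $I_\alpha(0)=0$, so fix $t>0$. The first inequality, $I_\alpha(t)\ge 0$, is immediate from $\eta,\xi_\alpha>0$. For the second, $1-e^{-\xi_\alpha t}\le 1$ gives $I_\alpha(t)\le \eta/\xi_\alpha$ and hence $\sum_\alpha I_\alpha(t)\le \eta\sum_\alpha 1/\xi_\alpha$, which is $\le d^2/(d-1)$ precisely by~(\ref{1a}); this bound is uniform in $t$, and in fact $\sup_t\sum_\alpha I_\alpha(t)=\eta\sum_\alpha 1/\xi_\alpha$, so~(\ref{1a}) is exactly what is required.

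The substance is the third inequality, $\sum_\alpha I_\alpha(t)\ge d\,I_\beta(t)$ for every $\beta$. Let $\xi_{\min}=\min_\gamma\xi_\gamma$ and $I_{\min}(t)=\frac{\eta}{\xi_{\min}}(1-e^{-\xi_{\min}t})$. I would first note the completely elementary bound, valid for every $\alpha$,
\[
  I_\alpha(t)=\frac{\eta}{\xi_\alpha}\bigl(1-e^{-\xi_\alpha t}\bigr)\;\ge\;\frac{\eta}{\xi_\alpha}\bigl(1-e^{-\xi_{\min}t}\bigr)=\frac{\xi_{\min}}{\xi_\alpha}\,I_{\min}(t),
\]
which holds simply because $\xi_\alpha\ge\xi_{\min}$ makes $1-e^{-\xi_\alpha t}\ge 1-e^{-\xi_{\min}t}$. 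Summing over $\alpha$ and factoring out $\xi_{\min}I_{\min}(t)\ge 0$ yields $\sum_\alpha I_\alpha(t)\ge \xi_{\min}I_{\min}(t)\sum_\alpha 1/\xi_\alpha$, and then invoking~(\ref{2a}) for the index with $\xi_\beta=\xi_{\min}$, i.e.\ $\sum_\alpha 1/\xi_\alpha\ge d/\xi_{\min}$, gives $\sum_\alpha I_\alpha(t)\ge d\,I_{\min}(t)$. It remains to upgrade $I_{\min}(t)$ to $I_\beta(t)$ for an arbitrary $\beta$: since $\xi\mapsto \tfrac1\xi(1-e^{-\xi t})$ is nonincreasing, $I_{\min}(t)=\max_\gamma I_\gamma(t)\ge I_\beta(t)$, and the third line of~(\ref{CON}) follows.

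The only genuinely analytic step is this last monotonicity: writing $f(\xi)=\tfrac1\xi(1-e^{-\xi t})$ I would compute $f'(\xi)=\xi^{-2}\bigl(e^{-\xi t}(1+\xi t)-1\bigr)$ and observe that $g(u):=e^{-u}(1+u)-1$ satisfies $g(0)=0$ and $g'(u)=-u e^{-u}\le 0$, so $g\le 0$ on $[0,\infty)$ and thus $f'\le 0$. The point that most needs care — and the place where one could go wrong — is the reading of~(\ref{2a}): it must be understood to hold for all $\beta=1,\dots,d+1$, equivalently $\sum_\alpha 1/\xi_\alpha\ge d/\xi_{\min}$. Assuming it for a single $\beta$ is not enough, because the bound $\sum_\alpha I_\alpha(t)\ge d\,I_\beta(t)$ is tightest exactly for the $\beta$ with smallest $\xi_\beta$ — which is also where~(\ref{2a}) is most restrictive — and the argument exploits precisely this alignment. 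Everything else is a one-line substitution plus the two monotonicities in $\xi$.
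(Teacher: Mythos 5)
Your proof is correct and follows essentially the same route as the paper's: both arguments reduce everything to the primitives $\frac{\eta}{\xi_\alpha}(1-e^{-\xi_\alpha t})$, handle the upper bound by letting $t\to\infty$ (equivalently $1-e^{-\xi_\alpha t}\le 1$) so that (\ref{1a}) is exactly what is needed, and prove the third condition by passing to $\xi_{\min}$, using $1-e^{-\xi_\alpha t}\ge 1-e^{-\xi_{\min}t}$ together with the monotone decrease of $\xi\mapsto\frac{1}{\xi}(1-e^{-\xi t})$. The only difference is cosmetic: you supply the derivative computation establishing that monotonicity, which the paper merely asserts.
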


\begin{proof}
Let us start with showing that the first inequality in the Fujiwara-Algoet conditions (\ref{Fuji-d}),
\begin{eqnarray}\label{fuji}
 -\frac{1}{d-1} \leq  \sum_{\beta =1}^{d+1} \lambda_\beta,
\end{eqnarray}
is equivalent to (\ref{1a}). For our choice of $\ell_\alpha(t)$'s, the eigenvalues of $\Lambda_t$ are equal to
\begin{equation}\label{lambda}
\lambda_\alpha(t)=1-\frac{\eta}{\xi_\alpha}\left(1-e^{-\xi_\alpha t}\right).
\end{equation}
After inserting (\ref{lambda}) into (\ref{fuji}), one has
\begin{eqnarray}
&& -\frac{1}{d-1} \leq  d+1-\sum_{\alpha =1}^{d+1}\frac{\eta}{\xi_\alpha}\left(1-e^{-\xi_\alpha t}\right).
\end{eqnarray}
While this inequality holds for all $t\geq 0$, it is enough to check that it is true for $t\to\infty$. Therefore, we arrive at
\begin{equation}
-\frac{1}{d-1} \leq d+1-\sum_{\beta =1}^{d+1}\frac{\eta}{\xi_\beta},
\end{equation}
which is, indeed, equivalent to (\ref{1a}).

Now, we start from (\ref{2a}). Denote the minimal value of $\xi_\alpha$ by $\xi_{min}=\min_\alpha\xi_\alpha$. If (\ref{2a}) holds for every $\beta=1,\dots,d+1$, then it is also true for $\xi_{min}$. Multiplying both sides of the inequality by the same (positive) coefficient, we get
\begin{equation}\label{2aa}
\sum_{\alpha=1}^{d+1}\frac{1}{\xi_\alpha}\left(1-e^{-\xi_{min} t}\right)\geq \frac{d}{\xi_{min}}\left(1-e^{-\xi_{min} t}\right).
\end{equation}
Observe that
\begin{equation}
\sum_{\alpha=1}^{d+1}\frac{1}{\xi_\alpha}\left(1-e^{-\xi_\alpha t}\right)\geq \sum_{\alpha=1}^{d+1}\frac{1}{\xi_\alpha}\left(1-e^{-\xi_{min} t}\right),
\end{equation}
and therefore (\ref{2aa}) reduces to
\begin{equation}\label{ost}
\sum_{\alpha=1}^{d+1}\frac{1}{\xi_\alpha}\left(1-e^{-\xi_\alpha t}\right)\geq \frac{d}{\xi_{min}}\left(1-e^{-\xi_{min} t}\right).
\end{equation}
Lastly, note that
\begin{equation}
h(\xi,t)=\frac{d}{\xi}\left(1-e^{-\xi t}\right)
\end{equation}
is a function of $\xi$ which decreases monotonically with the increasing value of $\xi$ for each fixed $t\geq 0$. Hence, $h(\xi_{min},t)=\max_\alpha h(\xi_\alpha,t)$, which means that (\ref{ost}) is equivalent to the second inequality in the Fujiwara-Algoet conditions (\ref{Fuji-d}).
\end{proof}

For $\ell_\alpha(t) = \eta e^{- \xi_\alpha t}$, one finds
\begin{equation}\label{}
  \kappa_\alpha(t) = -\eta\delta(t)+\eta(\xi_\alpha-\eta)e^{-(\xi_\alpha-\eta)t},
\end{equation}
and finally
\begin{equation}\label{}
\begin{split}
  k_\alpha(t) =& \frac 1d \eta\delta(t)
+\eta(\xi_\alpha-\eta)e^{-(\xi_\alpha-\eta)t}
\\&-\frac{1}{d}\sum_{\beta=1}^{d+1}\eta(\xi_\beta-\eta)e^{-(\xi_\beta-\eta)t},
\end{split}
\end{equation}
which shows that a linear combination of simple exponential memory functions has to satisfy strong constraints (\ref{1a}--\ref{2a}). Observe that $\kappa_\alpha(t\to\infty)\to\infty$ for $\xi_\alpha-\eta<0$. Conditions (\ref{1a}-\ref{2a}) imply
\begin{equation}
\xi_\alpha-\eta\geq-\frac{\eta}{d},
\end{equation}
which means that we need an additional restriction for the choice of $\eta$ and $\xi_\alpha$'s to obtain a physical memory kernel.

%

A special class of memory kernels is given by
\begin{equation}\label{special}
  \ell_\alpha(t) = \frac{1}{a_\alpha} \ell(t) .
\end{equation}
In this case, Theorem \ref{TH1} implies the following.

\begin{Proposition} If the function $\ell(t)$ and the collection of numbers $\{a_1,\ldots,a_{d+1}\}$ satisfy
\begin{equation}\label{29}
\sum_{\alpha=1}^{d+1}\frac{1}{a_\alpha}\int_0^t \ell(\tau)\der\tau\leq \frac{d^2}{d-1},
\end{equation}
together with
\begin{equation}\label{30}
\sum_{\beta=1}^{d+1}\frac{1}{a_\beta}\geq\frac{d}{a_\alpha},
\end{equation}
then $K_t$ given by the following eigenvalues (in the LT domain),
\begin{equation}\label{kappa_tilde}
\widetilde{\kappa}_\alpha(s)=-\frac{s\widetilde{\ell}(s)/a_\alpha}{1-\widetilde{\ell}(s)/a_\alpha},
\end{equation}
defines the legitimate memory kernel for the evolution described by the generalized Pauli channel.
\end{Proposition}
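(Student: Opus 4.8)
The plan is to derive the statement as an immediate specialization of Theorem~\ref{TH1} to the ansatz~(\ref{special}), so that the ``proof'' is essentially a substitution. First I would pass~(\ref{th_1}) through the replacement $\ell_\alpha(t)=\ell(t)/a_\alpha$: by linearity of the Laplace transform one has $\widetilde{\ell}_\alpha(s)=\widetilde{\ell}(s)/a_\alpha$, and inserting this into~(\ref{th_1}) reproduces~(\ref{kappa_tilde}) verbatim. Hence the only thing left to verify is that the three conditions~(\ref{CON}) of Theorem~\ref{TH1} follow from~(\ref{29})--(\ref{30}).

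Write $L(t):=\int_0^t\ell(\tau)\der\tau$, so that $\int_0^t\ell_\alpha(\tau)\der\tau=L(t)/a_\alpha$. As in the preceding Proposition, one works under the positivity assumptions $a_\alpha>0$ for all $\alpha$ and $L(t)\geq0$ for all $t\geq0$; indeed, these are exactly what is needed for the first line of~(\ref{CON}) to hold, and a mixed-sign choice of the $a_\alpha$ is incompatible with that line unless $\ell\equiv0$, so no generality is lost. Granting them, the first inequality of~(\ref{CON}) is automatic. The second one reads $L(t)\sum_{\alpha=1}^{d+1}1/a_\alpha\leq d^2/(d-1)$, which is precisely~(\ref{29}). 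For the third, fix $\beta\in\{1,\ldots,d+1\}$: the inequality $L(t)\sum_{\alpha=1}^{d+1}1/a_\alpha\geq d\,L(t)/a_\beta$ holds trivially when $L(t)=0$, and for $L(t)>0$ it is, after division by $L(t)$, equivalent to $\sum_{\alpha=1}^{d+1}1/a_\alpha\geq d/a_\beta$, i.e.\ to~(\ref{30}). All three conditions of~(\ref{CON}) are therefore met, so Theorem~\ref{TH1} guarantees that $K_t$ with eigenvalues~(\ref{kappa_tilde}) is a legitimate memory kernel for the generalized Pauli evolution.

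I do not anticipate any real difficulty: everything here is a bookkeeping consequence of Theorem~\ref{TH1}. The only points that merit a careful sentence are the explicit statement of the positivity hypotheses $a_\alpha>0$ and $L(t)\geq0$ (tacitly required for~(\ref{special}) to be an admissible parameterization) and the trivial-but-necessary remark about the instants where $L(t)=0$ when one divides through by $L(t)$ to recover~(\ref{30}).
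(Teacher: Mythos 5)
Your proposal is correct and follows exactly the route the paper intends: the paper offers no separate proof, presenting the Proposition as an immediate specialization of Theorem~\ref{TH1} to the ansatz $\ell_\alpha(t)=\ell(t)/a_\alpha$, which is precisely the substitution you carry out. Your explicit remark that the tacit positivity hypotheses $a_\alpha>0$ and $\int_0^t\ell(\tau)\der\tau\geq 0$ are needed for the first line of the conditions of Theorem~\ref{TH1} (and are not implied by the stated inequalities alone) is a small but genuine improvement on the paper's presentation.
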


\begin{Proposition}
Let $\ell(t)$ be given by the following convolution,
\begin{equation}
\ell(t)=e^{-z_1t}*\cdots*e^{-z_nt},
\end{equation}
where $z_k > 0$ and $z_i \neq z_j$ for $i \neq j$. If $a_\alpha$ satisfy (\ref{30}) and
\begin{equation}\label{56}
\prod_{k=1}^nz_k\geq \frac{d-1}{d^2} \sum_{\alpha=1}^{d+1}\frac{1}{a_\alpha},
\end{equation}
then $\widetilde{\kappa}_\alpha(s)$'s given in (\ref{kappa_tilde}) define a legitimate memory kernel.
\end{Proposition}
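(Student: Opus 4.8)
The plan is to deduce the statement from the preceding Proposition by verifying that the present data --- the function $\ell(t)$ and the numbers $\{a_\alpha\}$ --- satisfy its hypotheses (\ref{29}) and (\ref{30}). Since (\ref{30}) is assumed verbatim, the whole task reduces to extracting (\ref{29}) from the single additional assumption (\ref{56}).

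First I would move to the Laplace domain. Because $\ell(t) = e^{-z_1 t}*\cdots*e^{-z_n t}$ and the Laplace transform sends convolutions to products, one has $\widetilde{\ell}(s) = \prod_{k=1}^{n}(s+z_k)^{-1}$, so that
\begin{equation}
\int_0^\infty \ell(\tau)\,\der\tau = \widetilde{\ell}(0) = \prod_{k=1}^{n}\frac{1}{z_k}.
\end{equation}
Equivalently, the partial-fraction decomposition --- available because the $z_k$ are distinct --- gives $\ell(t) = \sum_{k=1}^{n} c_k e^{-z_k t}$ with $c_k = \prod_{j\neq k}(z_j-z_k)^{-1}$, whence $\int_0^t \ell(\tau)\,\der\tau = \sum_{k=1}^{n}\frac{c_k}{z_k}\big(1-e^{-z_k t}\big)$, which tends to the same limit as $t\to\infty$. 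The point I would stress is that $\ell(t)\geq 0$ for all $t\geq 0$: it is a convolution of non-negative functions, so non-negativity follows by induction on $n$. Hence $t\mapsto\int_0^t\ell(\tau)\,\der\tau$ is non-decreasing, and therefore
\begin{equation}
0\;\leq\;\int_0^t \ell(\tau)\,\der\tau\;\leq\;\int_0^\infty \ell(\tau)\,\der\tau\;=\;\prod_{k=1}^{n}\frac{1}{z_k},\qquad t\geq 0 .
\end{equation}

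Next I would multiply this bound by $\sum_{\alpha=1}^{d+1}a_\alpha^{-1}$ and invoke (\ref{56}):
\begin{equation}
\sum_{\alpha=1}^{d+1}\frac{1}{a_\alpha}\int_0^t\ell(\tau)\,\der\tau\;\leq\;\frac{1}{\prod_{k=1}^{n}z_k}\sum_{\alpha=1}^{d+1}\frac{1}{a_\alpha}\;\leq\;\frac{d^2}{d-1},
\end{equation}
which is precisely (\ref{29}). Combined with the assumed (\ref{30}) --- and noting that $\int_0^t\ell_\alpha(\tau)\,\der\tau = a_\alpha^{-1}\int_0^t\ell(\tau)\,\der\tau\geq 0$ (the $a_\alpha$ being positive) secures the first line of (\ref{CON}) as well --- the hypotheses of the preceding Proposition are met, so the memory kernel with eigenvalues (\ref{kappa_tilde}) is legitimate. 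The only delicate step, and the sole place where the hypothesis $z_k>0$ is genuinely used, is the passage from the running integral to its supremum $\widetilde{\ell}(0)$; this I would justify through the non-negativity and exponential decay of $\ell$, both of which are read off from the convolution structure (or the partial-fraction formula) above. Everything else is a direct substitution into the earlier results.
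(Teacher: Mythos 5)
Your proof is correct and follows essentially the same route as the paper's: both rest on the identity $\int_0^\infty e^{-z_1t}*\cdots*e^{-z_nt}\,\der t = 1/(z_1\cdots z_n)$ together with the monotonicity of $t\mapsto\int_0^t\ell(\tau)\,\der\tau$, the paper phrasing this as ``$p_0(t)$ attains its minimum as $t\to\infty$'' while you phrase it as verifying condition (\ref{29}), which is the same inequality. Your explicit justification of $\ell(t)\geq 0$ via the convolution structure is a point the paper leaves implicit.
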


\begin{proof}
It is enough to verify that $p_0(t\to\infty)\geq 0$, as the smallest value of $p_0(t)$ corresponds to the asymptotic case where $t\to\infty$. Using the following result,
\begin{equation}
\int_0^\infty e^{-z_1t} \ast \ldots \ast e^{-z_nt}\der t=\frac{1}{z_1 \ldots z_n},
\end{equation}
together with equation (\ref{c1}), let us write out the explicit form of $p_0(t\to\infty)$ for the chosen $\ell(t)$,
\begin{equation}\label{q0}
p_0(t\to\infty)=d^2-(d-1)\sum_{\alpha=1}^{d+1}\frac{1}{a_\alpha}
\frac{1}{\prod_{k=1}^nz_k}.
\end{equation}
This is always non-negative if (\ref{56}) is satisfied.
\end{proof}

\section{Semi-Markov evolution}

The quantum semi-Markov evolution is the quantum analogue of the classical concept of the stochastic semi-Markov process.
Such process is defined  in terms of the semi-Markov matrix $q_{ij}(t)\geq 0$ ($t\geq 0$), which determines the probability $\int_0^tq_{ij}(\tau)\der\tau$ of jump $j\to i$ at $\tau\in[0,t]$ if the system is in the state $j$ at $\tau=0$. Using this matrix, one defines the waiting time distribution and the survival probability by
\begin{equation}
f_j(t)=\sum_{i=1}^dq_{ij}(t),\qquad g_j(t)=1-\int_0^tf_j(\tau)\der\tau,
\end{equation}
respectively. The stochastic evolution of the probability vector $\mathbf{p}$,
\begin{equation}
\mathbf{p}(t)=T(t)\mathbf{p},\qquad T(0)=\oper,
\end{equation}
is provided by the stochastic map constructed as follows,
\begin{equation}
T(t)=n(t)+(n*q)(t)+(n*q*q)(t)+\dots,
\end{equation}
where $n_{ij}(t) = g_j(t) \delta_{ij}$. It satisfies classical memory kernel master equation,
\begin{equation}\label{}
  \frac{d}{dt} {T}(t) = \int_0^t K(t-\tau) T(\tau) d\tau ,
\end{equation}
with
\begin{equation}\label{}
  \widetilde{K}(s) = s \mathbb{I} - [\mathbb{I} - \widetilde{q}(s)] \widetilde{n}^{-1}(s) .
\end{equation}

The quantum semi-Markov evolution \cite{semi-1,semi-2} is defined in terms of the so-called quantum semi-Markov map, i.e. the completely positive map $Q_t$ for which $\int_0^t Q_\tau^\dagger[\mathbb{I}]\der\tau\leq\mathbb{I}$. By $Q_t^\dagger$, we understand the map dual to $Q_t$ in the sense that $\Tr(XQ_t[Y])=\Tr(Q_t^\dagger[X]Y)$.
For the given semi-Markov map, one defines the waiting time operator $F_t=Q_t^\dagger[\mathbb{I}]$ and the survival operator
\begin{equation}
G_t=\mathbb{I}-\int_0^tF_\tau\der\tau,
\end{equation}
where $G_t\geq 0$, $G_0=\mathbb{I}$. In the quantum semi-Markov evolution, $N_t$ is given by
\begin{equation}
N_t[\rho]=\sqrt{G_t}\rho\sqrt{G_t},
\end{equation}
and therefore it is fully determined by the choice of $Q_t$. The dynamical map $\Lambda_t$ is represented by the series of convolutions,
\begin{equation}\label{Dyson}
  \Lambda_t = N_t + N_t \ast Q_t + N_t \ast Q_t \ast Q_t + \ldots .
\end{equation}
This series is convergent if $||\widetilde{Q}_s||_1<1$, where $||X||_1$ denotes the trace norm of $X$. Such representation of the dynamical map allows us to construct the corresponding memory kernel via
\begin{equation}\label{Kt}
K_t=B_t-Z_t.
\end{equation}
The maps $B_t$ and $Z_t$ are defined, in the LT domain, by the following relations,
\begin{equation}\label{}
\widetilde{N}_s=[s\oper+\widetilde{Z}_s]^{-1},\qquad \widetilde{Q}_s=\widetilde{B}_s\widetilde{N}_s,
\end{equation}
and give rise to the following formula for the memory kernel,
\begin{equation}\label{}
  \widetilde{K}_s = s \oper - [\oper - \widetilde{Q}_s]\widetilde{N}_s^{-1} .
\end{equation}
Now, for the generalized Pauli channels, we take
\begin{equation}\label{Q_t}
Q_t = \frac{1}{d-1}\sum_{\alpha=1}^{d+1} f_\alpha(t)
 \mathbb{U}_\alpha,
\end{equation}
with $f_\alpha(t)\geq 0$ and $\int_0^\infty f(t) \der t \leq 1$, where
\begin{equation}\label{}
  f(t) = \sum_{\alpha=1}^{d+1}f_\alpha(t).
\end{equation}
The quantum waiting time and the quantum survival time operators have simple forms,
\begin{equation}
F_t= f(t) \mathbb{I},\qquad
G_t=g(t)\mathbb{I},
\end{equation}
with
\begin{equation}
g(t)=1- \int_0^t f(\tau)\der\tau.
\end{equation}
After some straightforward calculations, we obtain the following semi-Markov memory kernel,
\begin{equation}\label{semi-Markov_kernel}
{K}_t = \sum_{\alpha=1}^{d+1} {k}_\alpha(t)
\left[\Phi_\alpha-\oper\right] ,
\end{equation}
where
\begin{equation}
\widetilde{k}_\alpha(s) = \frac{d}{d-1}\frac{\widetilde{f}_\alpha(s)}{\widetilde{g}(s)} .
\end{equation}
Finally, the generalized Pauli channel generated by (\ref{semi-Markov_kernel}) is determined by
\begin{equation}\label{eigenv_SM}
\widetilde{\lambda}_\alpha(s)=-\frac{d-1}{s}\frac{\widetilde{f}(s)-1}
{\widetilde{f}(s) - d \widetilde{f}_\alpha(s) + d  -1}.
\end{equation}
It implies the following relations between $\widetilde{f}_\alpha(s)$ and $\widetilde{\ell}_\alpha(s)$:
\begin{equation}
\widetilde{\ell}_\alpha(s)=\frac{d\left(\widetilde{f}(s)-\widetilde{f}_\alpha(s)\right)}
{\widetilde{f}(s)- d\widetilde{f}_\alpha(s) + d-1},
\end{equation}
\begin{equation}
\widetilde{f}_\alpha(s)=\frac{\sum_{\beta=1}^{d+1}\frac{1}{1-\widetilde{\ell}_\beta(s)}
-\frac{d}{1-\widetilde{\ell}_\alpha(s)}-1}
{\sum_{\beta=1}^{d+1}\frac{1}{1-\widetilde{\ell}_\beta(s)}+\frac{1}{d-1}}.
\end{equation}

In the isotropic case -- that is, when $f_\alpha(t) = \chi(t)$ and
$$   \int_0^\infty \chi(t) \der t \leq \frac{1}{d+1}, $$
one finds
\begin{equation}\label{}
  \widetilde{\ell}_\alpha(s) = \widetilde{\nu}(s) = \frac{d^2 \widetilde{\chi}(s)}{\widetilde{\chi}(s) + d-1} .
\end{equation}

It turns out \cite{semi-1} that the representation (\ref{Dyson}) of $\Lambda_t$ allows one to consider the following inhomogeneous memory kernel master equation \cite{Ciccarello,Vacchini},
\begin{equation}\label{IN}
  \dot{\Lambda_t} = \int_0^t \mathbb{K}_{t-\tau} \Lambda_\tau d \tau + \dot{N}_t,
\end{equation}
where the new kernel $\mathbb{K}_t$ is defined by
\begin{equation}\label{}
  \widetilde{\mathbb{K}}_s = s \widetilde{N}_s \widetilde{Q}_s \widetilde{N}^{-1}_s  .
\end{equation}
In particular, if $\widetilde{N}_s$ and $\widetilde{Q}_s$ commute, then $\widetilde{\mathbb{K}}_s = s  \widetilde{Q}_s $ -- or, equivalently, in the time domain, $\mathbb{K}_t = \dot{Q}_t + Q_0\delta(t)$. In our case, it gives
\begin{equation}\label{}
  \mathbb{K}_t = \sum_{\alpha=1}^{d+1} h_\alpha(t) \mathbb{U}_\alpha ,
\end{equation}
with $h_\alpha(t)=\dot{f}_\alpha(t)+f_\alpha(0)\delta(t)$,
and hence eq. (\ref{IN}) provides the following inhomogeneous equation for the density operator $\rho_t$ with the initial state $\rho_0$,
\begin{equation}\label{}
  \dot{\rho}_t = \int_0^t  \sum_{\alpha=1}^{d+1} h_\alpha(t-\tau)  \mathbb{U}_\alpha[\rho_\tau] \der \tau - f(t) \rho_0 .
\end{equation}

\begin{example} In the qubit case $(d=2)$, one finds
\begin{equation}\label{}
  \dot{\rho}_t = \int_0^t  \sum_{\alpha=1}^{3} h_\alpha(t-\tau) \sigma_\alpha \rho_\tau \sigma_\alpha \der \tau - f(t) \rho_0 ,
\end{equation}
or, introducing the Bloch vector $x_\alpha(t) = {\rm Tr}[\rho_t \sigma_\alpha]$,
\begin{equation}\label{}
\begin{split}
  \dot{x}_\alpha(t) = \int_0^t  \left[2h_\alpha(t-\tau)-h(t-\tau)\right] &x_\alpha(\tau) d\tau \\&- f(t) x_\alpha(0),
\end{split}
\end{equation}
with $h(t)=\sum_{\alpha=1}^3h_\alpha(t)$.
\end{example}

\section{Discrete Wigner functions and classical semi-Markov evolution}

The information encoded into the density operator $\rho$ can be translated into the following $d+1$ probability distributions,
\begin{equation}\label{}
  \pi^{(\alpha)}_k = \Tr\left(P^{(\alpha)}_k \rho\right) .
\end{equation}
The probability vectors $\left(\pi^{(\alpha)}_1,\dots,\pi^{(\alpha)}_{d+1}\right)$ evolve according to the classical evolution equation
\begin{equation}\label{clas_evo}
\pi_k^{(\alpha)}(t)= \sum_{i=0}^{d-1} T_{ki}^{(\alpha)}(t)\pi_i^{(\alpha)}(0)
\end{equation}
with the stochastic (even doubly stochastic) map
\begin{equation}\label{}
  T_{ij}^{(\alpha)}(t) = \Tr(P^{(\alpha)}_i \Lambda_t[P^{(\alpha)}_j]).
\end{equation}
One easily finds
\begin{equation}\label{T_sol}
T^{(\alpha)}(t) = c_\alpha(t) \mathbb{I} +  \left[1- c_\alpha(t) \right]\mathcal{P} ,
\end{equation}
where $\mathcal{P}_{ij} = 1/d$ and
\begin{equation}\label{}
  c_\alpha(t) = \frac{d}{d-1}\left[p_0(t)+p_\alpha(t)-\frac 1d\right].
\end{equation}
If $\Lambda_t$ is the solution of the quantum memory kernel master equation with $K_t$ as in Theorem 1, then the stochastic map takes the following form,
\begin{equation}\label{T}
T^{(\alpha)}(t)=\left[1-\int_0^t\ell_\alpha(\tau)\der\tau\right] \mathbb{I}
+\int_0^t\ell_\alpha(\tau)\der\tau\ \mathcal{P}.
\end{equation}

Observe that, knowing the probability distributions $\pi_k^{(\alpha)}$, one can express the discrete Wigner function $W_\alpha$ in terms of $\pi_k^{(\alpha)}$. Therefore, it is also possible to find the time-evolution evolution of $W_\alpha$. Recall the definition of the discrete Wigner function \cite{Wootters2},
\begin{equation}\label{W}
W_\alpha=\frac 1d \Tr\left(\rho A_\alpha\right),
\end{equation}
where, after introducing $\alpha=(a_1,a_2)$, the operators $A_\alpha$ are given by
\begin{equation}\label{A}
\begin{split}
A_\alpha=\frac 12\big[(-1)^{a_1}\sigma_3+(-1)^{a_2}&\sigma_1+(-1)^{a_1+a_2}\sigma_2+\mathbb{I}\big]
\end{split}
\end{equation}
for $d=2$ and
\begin{equation}
(A_\alpha)_{kl}=\delta_{2a_1,k+l}\ e^{2\pi\imag a_2(k-l)/d},
\end{equation}
for a prime $d>2$.
Let us illustrate our claim in the following example.

\begin{example}
Calculate the discrete Wigner function for a qubit ($d=2$). From definition, one has
\begin{equation}
\begin{split}
&W_{00}(t)=\frac 14 \big(1+x_1(t)+x_2(t)+x_3(t)\big),\\
&W_{01}(t)=\frac 14 \big(1-x_1(t)-x_2(t)+x_3(t)\big),\\
&W_{10}(t)=\frac 14 \big(1+x_1(t)-x_2(t)-x_3(t)\big),\\
&W_{11}(t)=\frac 14 \big(1-x_1(t)+x_2(t)-x_3(t)\big),
\end{split}
\end{equation}
where $x_\alpha(t) = {\rm Tr}[\rho_t \sigma_\alpha]$ is the Bloch vector. Observe that the Bloch vector is related to the probability distributions $\pi^{(\alpha)}_k(t)$ as follows,
\begin{equation}
\pi^{(\alpha)}_{k}(t)=\frac 12\big[1- (-1)^k x_\alpha(t)\big]; \ \ k =1,2.
\end{equation}
Therefore, if $\pi^{(\alpha)}_k(t)$'s evolve according to the classical evolution equation (\ref{clas_evo}) with the bistochastic map (\ref{T}), then the corresponding discrete Wigner function satisfies the following evolution equation,
\begin{equation}
\mathbf{W}(t)=S(t)\mathbf{W},\qquad S(0)=\oper,
\end{equation}
with $\mathbf{W}=(W_{00},W_{01},W_{10},W_{11})$ and the bistochastic map $S(t)$. The map $S(t)$ has a simple structure,
\begin{equation}
S(t)=\frac 14
\begin{bmatrix*}[r]
s_0(t) & s_3(t) & s_1(t) & s_2(t) \\
s_3(t) & s_0(t) & s_2(t) & s_1(t) \\
s_1(t) & s_2(t) & s_0(t) & s_3(t) \\
s_2(t) & s_1(t) & s_3(t) & s_0(t)
\end{bmatrix*},
\end{equation}
where
\begin{align*}
&s_0(t)=4-\sum_{\beta=1}^3 \int_0^t\ell_\beta(\tau)\der\tau,\\
&s_\alpha(t)=\sum_{\beta=1}^3 \int_0^t\ell_\beta(\tau)\der\tau
-2\int_0^t\ell_\alpha(\tau)\der\tau.
\end{align*}
\end{example}

Now, suppose that $\Lambda_t$ obeys the quantum semi-Markov evolution defined by the quantum semi-Markov map $Q_t = \frac{1}{d-1}\sum_\alpha f_\alpha(t) \mathbb{U}_\alpha$. Using the representation (\ref{Dyson}),
\begin{equation}\label{}
  \Lambda_t = N_t + N_t \ast Q_t + N_t \ast Q_t \ast Q_t + \ldots,
\end{equation}
and the following property of $Q_t$,
\begin{equation}\label{}
  Q_t[P^{(\alpha)}_i ] = \sum_{j=0}^{d-1} q^{(\alpha)}_{ij}(t)  P^{(\alpha)}_j ,
\end{equation}
with
\begin{eqnarray}\label{}
  q^{(\alpha)}_{ij}(t) &=& \Tr\left(  P^{(\alpha)}_i Q_t[  P^{(\alpha)}_j]\right) \nonumber \\ &=& \delta_{ij}f_\alpha(t)+\frac{1-\delta_{ij}}{d-1} [f(t) - f_\alpha(t)],
\end{eqnarray}
one finds the corresponding representation of the stochastic map $T^{(\alpha)}(t)$,
\begin{equation}
T^{(\alpha)} = n + n \ast q^{(\alpha)} +  n \ast q^{(\alpha)} \ast  q^{(\alpha)} + \dots ,
\end{equation}
where
\begin{equation}\label{}
n_{ij}(t) = \Tr\left(P^{(\alpha)}_i N_t[  P^{(\alpha)}_j]\right) = g(t) \delta_{ij} .
\end{equation}
Interestingly, the map $n(t)$ is universal -- that is, it does not depend on `$\alpha$'.

\section{Examples}

\subsection{Markovian semigroup}

Let us consider the evolution provided by the following Gorini-Kossakowski-Sudarshan-Lindblad generator,
\begin{equation}
\mathcal{L}=\sum_{\alpha=1}^{d+1}\gamma_\alpha\mathcal{L}_\alpha,
\end{equation}
where $\gamma_\alpha\geq 0$. Due to (\ref{eigenvalue2}), one has
\begin{equation}
\lambda_\alpha(t)=e^{(\gamma_\alpha-\gamma)t}.
\end{equation}
Now, the memory kernel equation (\ref{memory_kernel_equation}), with $K_t$ satisfying the assumptions of Theorem 1, describes the dynamics of the Markovian semigroup if and only if
\begin{equation}\label{}
  \ell_\alpha(t) =  \frac{d(\gamma-\gamma_\alpha)e^{-\frac{d-1}{d}\gamma t}}
{(\gamma-d\gamma_\alpha)e^{-\frac{d-1}{d}\gamma t}+d}.
\end{equation}
Moreover, the Markovian semigroup is generated by the quantum semi-Markov map $Q_t$ (\ref{Q_t}) with
\begin{equation}\label{Markovian}
f_\alpha(t)=\frac{d-1}{d} \gamma_\alpha e^{-\frac{d-1}{d}\gamma t}.
\end{equation}

\subsection{Oscillatory behaviour}

Take the oscillating functions
\begin{equation}\label{osc}
\ell_\alpha(t) = \frac{\omega}{a_\alpha} \sin \omega t.
\end{equation}
One finds the map (\ref{GPC}) with the following probability vector,
\begin{equation}
p_0(t)=1-\frac{d-1}{d^2}(1-\cos\omega t)\sum_{\beta=1}^{d+1}\frac{1}{a_\beta},
\end{equation}
\begin{equation}
p_\alpha(t)=\frac{d-1}{d^2}(1-\cos\omega t)\left[\sum_{\beta=1}^{d+1}\frac{1}{a_\beta}-\frac{d}{a_\alpha}\right].
\end{equation}
This corresponds to the legitimate generalized Pauli channel if and only if
\begin{equation}\label{osc_cond}
\frac{d}{a_\beta}\leq\sum_{\alpha=1}^{d+1}\frac{1}{a_\alpha}\leq\frac{d^2}{2(d-1)}.
\end{equation}
Note that $\ell_\alpha(t)$'s in (\ref{osc}) give rise to the memory kernel $K_t$ with the following eigenvalues,
\begin{equation}
{\kappa}_\alpha(t)=-\frac{\omega^2}{a_\alpha}\cos\left(\sqrt{1-\frac{1}{a_\alpha}}\omega t\right).
\end{equation}
Observe that (\ref{osc_cond}) implies
\begin{equation}
a_\alpha\geq 2\left(1-\frac 1d\right),
\end{equation}
which means that $a_\alpha\geq 1$ for $d\geq 2$, and hence $K_t$ is always well-defined.

\subsection{Convex combination of Markovian semigroups}

Let us provide a simple generalization of the quantum channels considered in \cite{Nina, nasze},
\begin{equation}\label{CC}
\begin{split}
\Lambda_t=&\sum_{\alpha=1}^{d+1}x_\alpha e^{dt\mathcal{L}_\alpha}=\frac 1d \Bigg[(1+[d-1]e^{-dt})\oper\\&+(1-e^{-dt})\sum_{\alpha=1}^{d+1}x_\alpha \mathbb{U}_\alpha\Bigg],
\end{split}
\end{equation}
where $x_\alpha$'s form the probability vector. Although the Kraus representation of $\Lambda_t$ is relatively complicated, each of its eigenvalues depends only on one $x_\alpha$,
\begin{equation}
\lambda_\alpha(t) = e^{-dt}+\left(1-e^{-dt}\right)x_\alpha.
\end{equation}
For (\ref{CC}), we can find the time-local generator $\mathcal{L}_t$ (\ref{GEN}) with the following decoherence rates,
\begin{equation}\label{gammas}
\gamma_\alpha(t) =
\sum_{\beta=1}^{d+1}\frac{1-x_\beta} {1+\left(e^{dt}-1\right)x_\beta} -d\frac{1-x_\alpha} {1+\left(e^{dt}-1\right)x_\alpha}.
\end{equation}
This evolution belongs to the special class described by the memory kernels previously discussed in (\ref{special}) with
\begin{equation}
\ell(t)=de^{-dt},\qquad a_\alpha=\frac{1}{1-x_\alpha}.
\end{equation}

Consider the qubit case ($d=2$) and suppose that $x_2=x_1\equiv x$. For such a choice, it is possible to recover the semi-Markov map $Q_t$ (\ref{Q_t}) with
\begin{equation}\label{f1}
\begin{split}
f_1(t)&=f_2(t)=\\&\frac{x}{\xi}e^{-\frac{3-2x}{2}t}\Bigg[
\xi\cosh\frac{\xi t}{2}-3(1-2x)\sinh\frac{\xi t}{2}\Bigg],
\end{split}
\end{equation}
\begin{equation}\label{f3}
\begin{split}
f_3(t)=\frac{1}{\xi}e^{-\frac{3-2x}{2}t}\Bigg[
\xi(1-2x)&\cosh\frac{\xi t}{2}\\&-(4x-1)\sinh\frac{\xi t}{2}\Bigg],
\end{split}
\end{equation}
where $\xi:=\sqrt{12x^2-4x+1}$. Note that
\begin{equation}
\begin{split}
\int_0^\infty\big[f_1(t)+f_2(t)&+f_3(t)\big]\der t\\&=\frac{-3x^2+3x-1}{x^2+x-1} \leq 1,
\end{split}
\end{equation}
and hence the evolution is semi-Markov if and only if $f_\alpha(t)\geq 0$ for all $t\geq 0$. This holds for the Markovian semigroup ($x=0$) and for the maximally mixed probability vector, i.e. $x=1/3$.

It turns out that this property carries over to higher dimensions. Indeed, for the probability vector $x_\alpha=\frac{1}{d+1}$, the generalized Pauli channel $\Lambda_t$ in (\ref{CC}) is generated by the quantum semi-Markov map $Q_t$ with
\begin{equation}\label{CC_Markovian}
f_\alpha(t)=\frac{d-1}{d+1}e^{-\frac{d(d+1)-1}{d+1}t}.
\end{equation}
Note that this evolution is also Markovian, as (\ref{gammas}) simplifies to
\begin{equation}
\gamma_\alpha(t)=\frac{d}{d+e^{dt}}.
\end{equation}
Observe that, in this case, the semi-Markov evolution is a subclass of the Markovian evolution.

\subsection{Eternally non-Markovian evolution}

As another special case of the convex combination of the Markovian semigroups, we analyze the eternally non-Markovian evolution, where $x_\alpha=1/d$ (for $\alpha=1,\dots,d$) and $x_{d+1}=0$. This corresponds to the following choice of the decoherence rates,
\begin{equation}\label{EM_gammas}
\gamma_\alpha(t)=1,\qquad
\gamma_{d+1}(t)=-(d-1)\frac{e^{dt}-1}{e^{dt}-1+d},
\end{equation}
with $\gamma_{d+1}(t)\leq 0$ for all $t\geq 0$. For $d=2$, one recovers the well-known eternally non-Markovian evolution of the qubit \cite{ENM},
\begin{equation}
\gamma_1(t)=\gamma_2(t)=1,\qquad \gamma_3(t)=-\tanh t.
\end{equation}
To determine whether this evolution is semi-Markov, we find the map $Q_t$ (\ref{Q_t}) with
\begin{equation}
\begin{split}
&f_\alpha(t)=(d-1)e^{-dt/2}\Bigg[
\frac{1}{d}\cosh\left(\sqrt{\frac{d^3-4d+4}{4d}}t\right)\\&
-\frac{d-2}{\sqrt{d^4-4d^2+4d}}\sinh\left(\sqrt{\frac{d^3-4d+4}{4d}}t\right)\Bigg],\\
\end{split}
\end{equation}
for $\alpha=1,\ldots,d$, and 
\begin{equation}
\begin{split}
f_{d+1}(t)&=-\frac{2(d-1)^2}{\sqrt{d^4-4d^2+4d}}e^{-dt/2}\\&\times
\sinh\left(\sqrt{\frac{d^3-4d+4}{4d}}t\right).
\end{split}
\end{equation}
Note that $f_\alpha(t)\geq 0$ for  $\alpha=1,\ldots,d$ but $f_{d+1}(t) < 0$ (for $t>0$). Therefore, $Q_t$ is not completely positive, and hence  the corresponding dynamical map $\Lambda_t$ is not semi-Markov. This example shows that, in general, the convex combination of Markovian semigroups goes beyond the semi-Markov evolution.

\section{Conclusions}

Using the memory kernel master equation, we analyzed the evolution of the special class of the dynamical maps, provided by the generalized Pauli channels.
We found the necessary and sufficient conditions which guarantee that the corresponding solution defines the legitimate physical evolution (CPTP map). Moreover, we analyzed a special class of the kernels corresponding to the quantum semi-Markov evolution. Such evolution defines a generalization of the Markovian semigroup. Surprisingly, the convex combination of Markovian semigroups is not semi-Markov. Several examples illustrate the general approach.

It would be interesting to further analyze the memory kernels going beyond the semi-Markov case. The example of the eternally non-Markovian evolution shows that one can obtain the legitimate dynamical map
\begin{equation}
\Lambda_t = N_t + N_t \ast Q_t + N_t \ast Q_t \ast Q_t + \ldots
\end{equation}
from not completely positive $Q_t$. Therefore, one would like to find weaker conditions for the maps $N_t$, $Q_t$ that still guarantee the complete positivity of $\Lambda_t$.

\section*{Acknowledgements} This paper was partially supported by the National Science Centre project 2015/17/B/ST2/02026.

\end{document}